\theoremstyle{plain}
\newtheorem{theorem}{Theorem}[section]
\newtheorem{lemma}[theorem]{Lemma}
\newtheorem{question}[theorem]{Question}
\theoremstyle{definition}
\newtheorem{claim}{Claim}
\newtheorem*{claim*}{Claim}
\newcommand{\BF}[1]{{\bf\boldmath{#1}\unboldmath}}
\newcommand{\feedback}{\tau}
\newcommand{\matching}{\mu}
\newcommand{\chromatic}{\chi}
\newcommand{\cliqueCover}{\mathrm{\pi}}
\newcommand{\fractionalCliqueCover}{\mathrm{\pi^*}}
\newcommand{\IG}{\mathrm{IG}}
\newcommand{\functions}[2]{F(#1,#2)}
\newcommand{\Fix}{\mathrm{Fix}}
\newcommand{\guessing}[2]{g(#1,#2)}
\newcommand{\guessingf}{g}
\newcommand{\ShannonEntropy}{\eta}
\title{On the possible values of the entropy of undirected graphs}
\author{Maximilien Gadouleau\footnote{School of Engineering and Computing Sciences, Durham University, Durham, UK. Email: \texttt{m.r.gadouleau@durham.ac.uk}}}
\begin{document}

\maketitle

\begin{abstract}
The entropy of a digraph is a fundamental measure which relates network coding, information theory, and fixed points of finite dynamical systems. In this paper, we focus on the entropy of undirected graphs. We prove that for any integer $k$ the number of possible values of the entropy of an undirected graph up to $k$ is finite. We also determine all the possible values for the entropy of an undirected graph up to the value of four.
\end{abstract}

\section{Introduction}

\subsection{Finite Dynamical Systems and their fixed points}

\BF{Finite Dynamical Systems} (FDSs) have been used to represent a network of interacting entities as follows. A network of $n$ entities has a state $x = (x_1,\dots, x_n) \in [q]^n$, represented by a $q$-ary variable $x_v \in [q] = \{0,1,\dots,q-1\}$ on each entity $i$, which evolves according to a deterministic function $f = (f_1,\dots,f_n) : [q]^n \to [q]^n$, where $f_v : [q]^n \to [q]$ represents the update of the local state $x_v$. FDSs have been used to model gene networks \cite{Kau69, Tho73, TK01, KS08}, neural networks \cite{Hop82, Gol85, Gur97}, social interactions \cite{PS83, GT83} and more (see \cite{TD90, GM90}). 

The architecture of an FDS $f: [q]^n \to [q]^n$ can be represented via its \BF{interaction graph} $\IG(f)$, which indicates which update functions depend on which variables. More formally, $\IG(f)$ has $\{1,\dots,n\}$ as vertex set and there is an arc from $u$ to $v$ if $f_v(x)$ depends on $x_u$. In different contexts, the interaction graph is known--or at least well approximated--, while the actual update functions are not. One main problem of research on FDSs is then to predict their dynamics according to their interaction graphs. 

Among the many dynamical properties that can be studied, \BF{fixed points} are crucial because they represent stable states; for instance, in the context of gene networks, they correspond to stable patterns of gene expression at the basis of particular biological processes. As such, they are arguably the property which has been the most thoroughly studied. The study of the number of fixed points and its maximisation in particular is the subject of a stream of work, e.g. in \cite{ADG04a,RRT08,Ara08,Ric09,ARS14,GRR14}. 

\subsection{Network coding and entropy of digraphs}

\BF{Network coding} is a technique to transmit information through networks, which can significantly improve upon routing in theory \cite{ACLY00, YLCZ06}. At each intermediate node $v$, the received messages $x_{u_1}, \ldots, x_{u_k}$ are combined, and the combined message $f_v(x_{u_1},\ldots,x_{u_k})$ is then forwarded towards its destinations. The main problem is to determine which functions $f_v$ can transmit the most information. In particular, the \BF{network coding solvability problem} tries to determine whether a certain network situation, with a given set of sources, destinations, and messages, is solvable, i.e. whether all messages can be transmitted to their destinations. This problem being very difficult, different techniques have been used to tackle it, including matroids \cite{DFZ06}, Shannon and non-Shannon inequalities for the entropy function \cite{DFZ07, Rii07}, error-correcting codes \cite{GR11}, and closure operators \cite{Gad13, Gad14}. 

The network coding solvability problem can be recast in terms of fixed points of FDSs as follows \cite{Rii07, Rii07a}. The so-called \BF{$q$-guessing number} \cite{Rii07} of a digraph $D$ is the logarithm of the maximum number of fixed points over all FDSs $f$ whose interaction graph is a subgraph of $D$: $\IG(f) \subseteq D$. The guessing number is always upper bounded by the size of a minimum feedback vertex set of $D$; if equality holds, we say that $D$ is \BF{solvable} over an alphabet of size $q$ and the FDS $f$ reaching this bound is called a solution. Then, a network coding instance $N$ is solvable if and only if some digraph $D_N$ related to the instance $N$ is solvable (see \cite{GR11, Gad13, GRF15} for further illustration of the relation between network coding and the guessing number).

For a given digraph $D$, the supremum over all $q$ of the $q$-ary guessing numbers of $D$ is referred to as the \BF{entropy} of $D$. The entropy then represents the maximum amount of information which can be transmitted over a network; it is the capacity of network coding.
Although the entropy can be determined for large classes of graphs (perfect graphs, for instance), the entropy of digraphs is not so well understood in general. For instance, very little is known about the set of all possible values of the entropy.

\subsection{Outline}

The rest of the paper is organised as follows. Section \ref{sec:background} reviews some important background on the guessing number and the entropy of graphs. Section \ref{sec:results} then gives our two main results. First, we show that for any integer $k$ the number of possible values of the entropy of an undirected graph up to $k$ is finite. We then determine all the possible values for the entropy of an undirected graph up to the value of four.

\section{Background} \label{sec:background}

We consider digraphs $D = (V,E)$, where $E \subseteq V^2$; we shall usually set $V = \{v_1, \dots, v_n\}$. An undirected graph is a digraph where $E$ is a symmetric set, i.e. $(u,v) \in E$ if and only if $(v,u) \in E$; we view the pair of arcs $(u,v)$, $(v,u)$ as an edge and denote it by $uv$. A simple graph is an undirected graph without loops. For any digraph $D = (V,E)$ and any $S \subseteq V$, we denote the subgraph of $D$ induced by $S$ as $D[S]$:
$$
	D[S] = (S, \{(u,v) \in E : u, v\in S\}). 
$$ 
For any $S \subseteq V$, we also denote $D - S = D[V \setminus S]$.

We denote a bipartite graph with bipartition $A$ and $B$ and edge set $E'$ by the triple $(A,B,E')$. Then, if $G$ is a simple graph and $S, T \subseteq V$ with $T \cap S = \emptyset$, we denote the bipartite subgraph of $D$ induced by $G$ and $T$ as $G[S,T]$:
$$
	G[S,T] = (S, T, \{uv \in E : u \in S, v \in T\}). 
$$
The neighbourhood of a vertex $u$ in $G$ is $N(u;G) = \{v : uv \in E\}$; we shall usually omit the dependence in $G$. The neighbourhood of a set $S$ of vertices is $N(S) = \bigcup_{u \in S} N(u)$.

The guessing number and the entropy are formally defined as follows. Let $f: [q]^n \to [q]^n$. The \BF{interaction graph} of $f$ is the graph on $n$ vertices where $uv \in \IG(f)$ if and only if $f_v$ depends essentially on $x_u$, i.e. there exist $a,b \in A^n$ that only differ by $a_u \ne b_u$ such that $f_v(a) \ne f_v(b)$. We denote the set of all functions $f: [q]^n \to [q]^n$ interaction graph contained in a digraph $D$ as 
$$
	\functions{D}{q} = \{f : [q]^n \to [q]^n : \IG(f) \subseteq D \}.
$$
The \BF{guessing number} of $f : [q]^n \to [q]^n$ is $\guessingf(f) := \log_q |\Fix(f)|.$ The \BF{$q$-guessing number} \cite{Rii07} of $D$ is
$$
	\guessing{D}{q} := \max\{\guessingf(f) : f \in \functions{D}{q}\}.
$$
For any $D$, the $q$-guessing number tends to a limit when $q$ tends to infinity, which we shall refer to as the \BF{entropy} of $D$ \cite{Rii06, CM11}:
$$
	H(D) := \sup_{q \ge 2} \guessing{D}{q} = \lim_{q \to \infty} \guessing{D}{q}.
$$

We now review some general properties of the guessing number of digraphs found in \cite{Rii06, Rii07, CM11, GR11, GRF15}. By definition, if $D' \subseteq D$, then $\guessing{D'}{q} \le \guessing{D}{q}$. For any $D_1$ and $D_2$ on disjoint vertex sets, we have 
$$
	H(D_1 \cup D_2) = H(D_1) + H(D_2).
$$
For any digraph $D$ and any two induced subgraphs $D[V_1]$ and $D[V_2]$ of $D$ such that $V_1 \cap V_2 = \emptyset$ and $V_1 \cup V_2 = V$, we have
$$
	\guessing{D}{q} \le  \guessing{D[V_1]}{q} + |V_2|.
$$
Equality is reached in an important special case. Let $L$ be the set of vertices with a loop in $D$, then
$$
	\guessing{D}{q} = |L| + \guessing{D-L}{q}.
$$

Let $I$ be an acyclic set of $D$, then the $I$-reduction of $D$, denoted as $D^{-I} = (V^{-I}, E^{-I})$, is given as follows \cite{GRF15}. Let us denote $u \to_I v$ if there are vertices $i_1, \dots, i_k$ such that $u, i_1, \dots, i_k, v$ is a directed path in $D$. Then
\begin{align*}
	V^{-I} &= V \setminus I,\\
	E^{-I} &= \{(u,v) : u,v \in V^{-I}, (u,v) \in E \text{ or } u \to_I v\}.
\end{align*}
By \cite{GRF15}, for any acyclic set $I$ of $D$ and any $q \ge 2$,
$$
	\guessing{D}{q} \le \guessing{D^{-I}}{q}.
$$

A \BF{clique cover} of $D$ is a set of cliques $H_1, \dots, H_t$ such that every vertex belongs to at least one clique. The \BF{clique cover number} of $D$, denoted by $\cliqueCover(D)$, is the minimum number of cliques in a clique cover of $D$. Since $\guessing{K_n}{q} = n-1$ for all $n \ge 1$ and $q \ge 2$, we obtain that for any $D$ and any $q \ge 2$, 
\begin{equation} \label{eq:clique_cover}
	\guessing{D}{q} \ge n - \cliqueCover(D) \ge \matching(D),
\end{equation}
where $\matching(D)$ is the maximum matching number of $D$. In particular, if $G$ is simple, then $\guessing{G}{q} \ge n - \chromatic(\bar{G})$ for all $q \ge 2$. More specifically, if $G$ is a perfect graph, then $\guessing{G}{q} = \feedback(G) = n - \cliqueCover(G)$ for all $q \ge 2$. 

The lower bound on the guessing number in \eqref{eq:clique_cover} is refined as follows. A \BF{fractional clique cover} of a digraph $D$ is a family of cliques $H_1, \dots, H_t$ of $D$ together with non-negative weights $w_1, \dots, w_t$ such that
$$
	\sum_{i : v \in H_i} w_i \ge 1 \quad \forall\, v \in V.
$$
A clique cover is a fractional clique cover where all the weights belong to $\{0,1\}$. The minimum value of $\sum_{i=1}^t w_i$ over all clique covers is the \BF{fractional clique cover number} and is denoted by $\fractionalCliqueCover(D)$. Then there exists $k \ge 1$ such that for any $q \ge 2$ 
$$
	\guessing{D}{q^k} \ge n - \fractionalCliqueCover(D).
$$

Let $\feedback(D)$ denote the \BF{transversal number} of $D$, i.e. the minimum number of vertices which have to be removed from $D$ in order to obtain an acyclic digraph. If $G$ is a simple graph, then $\feedback(G)$ is the minimum vertex cover number of $G$. Then  for any $q \ge 2$,
$$
	\guessing{D}{q} \le \feedback(D).
$$

A finer upper bound on the entropy of a digraph is based on the submodularity of the entropy of random variables \cite{Rii07}. The \BF{Shannon entropy} of $D$ is defined as 
$$
	\ShannonEntropy(D) := \sup h(V),
$$
where the supremum is taken over all functions $h: 2^V \to \mathbb{R}$ satisfying
\begin{align*}
	h(v) &\le 1 && \forall v \in V,\\
	h(S) &\le h(T) && \forall S \subseteq T \subseteq V,\\
	h(S \cup T) + h(S \cap T) &\le h(S) + h(T) && \forall S,T \subseteq V,\\ 
	h(N(v) \cup \{v\}) &= h(N(v)) && \forall v \in V.
\end{align*}
We then have for any $q \ge 2$
$$
	\guessing{D}{q} \le \ShannonEntropy(D).
$$
Based on the Shannon entropy, we obtain that for odd cycles, $H(C_{2l+1}) = l + 1/2$ and for their complements, $H(\overline{C_{2l+1}}) = 2l - 1 - 1/l$ for any $l \ge 2$ \cite{CM11}.

\section{Main results} \label{sec:results}

Let the set of all entropies of simple graphs be
$$
	\mathcal{H} = \{H(G) : G \text{ simple, finite}\}.
$$
We remark that this is the same as considering undirected graphs.


By the results above, the set $\mathcal{H}$ is closed under addition and contains all natural numbers. The only non-integer values known so far are based on odd cycles $H(C_{2l+1}) = l + 1/2$  and their complements $H(\overline{C_{2l+1}}) = 2l - 1 - 1/l$ for $l \ge 2$ \cite{CM11}. Our first main result is that $\mathcal{H} \cap [0, k]$ is finite for every integer $k$; more precisely, it is $O(2^{6k^2})$.

\begin{theorem} \label{th:entropy_finite}
For any positive integer $k$, 
$$
	|\mathcal{H} \cap [k-1, k]| \le \sum_{h=1}^k \sum_{m = \lfloor h/2 \rfloor}^{h-1} 2^{2m(3m-2)}.
$$
\end{theorem}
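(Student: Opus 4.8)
The plan is to show that every simple graph $G$ with $H(G)\in[k-1,k]$ can be reduced, while preserving its entropy up to a known integer shift, to a bounded ``core'' graph, and then to count the possible cores.

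First I would pin down the matching number. The clique-cover bound \eqref{eq:clique_cover} gives $\matching(G)\le H(G)$, the transversal bound gives $H(G)\le\feedback(G)$, and since the endpoints of a maximum matching form a vertex cover we have $\feedback(G)\le 2\matching(G)$. This yields the sandwich $\matching(G)\le H(G)\le\feedback(G)\le 2\matching(G)$. Consequently, if $H(G)$ is non-integer and lies in $(h-1,h)$ then $m:=\matching(G)$ satisfies $\lfloor h/2\rfloor\le m\le h-1$: the upper bound follows from $m\le H(G)<h$, and the lower bound from $h-1<H(G)\le 2m$. The finitely many integer values in $[k-1,k]$ are treated separately, so it suffices to count the non-integer values, organised by the interval $(h-1,h)$ with $h\le k$ and by $m$; this is exactly the index set of the double sum.

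Next I would reduce $G$ to a core using two entropy-preserving operations. First, loops peel off exactly since $\guessing{D}{q}=|L|+\guessing{D-L}{q}$; more generally, reducing a simplicial vertex $w$ (one whose neighbourhood is a clique) via the acyclic set $\{w\}$ turns $N(w)$ into loops which then peel off, and one checks that the entropy is unchanged, so this contributes a nonnegative integer $\ell$ to $H(G)$. Second, call a vertex $v$ redundant if $H(G-v)=H(G)$; by subgraph monotonicity and the partition bound one has $H(G-v)\le H(G)\le H(G-v)+1$, so deleting a redundant vertex is harmless, and redundant vertices exist once the graph is large. Applying these operations until neither is possible yields a loopless core $G^\ast$ with $H(G)=\ell+H(G^\ast)$ and $H(G^\ast)\in(h-1,h)$ where $h=k-\ell\le k$; this shift by $\ell$ is the origin of the outer summation over $h$.

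The hard part will be bounding the size of the core. The naive idea of deleting the whole independent set $U$ of unmatched vertices by a single $I$-reduction only gives $H(G)\le H(G^{-U})$, and this is genuinely strict—for instance $H(C_5)=5/2$, whereas reducing the unmatched vertex of $C_5$ produces a graph of entropy $3$—so it cannot be used to identify the entropy. Twin- or domination-removal alone is also too weak, since a priori the distinct neighbourhoods of $U$ in the matched set could form a Sperner antichain of exponential size. I therefore expect the key lemma to be that, whenever $|V(G^\ast)|$ exceeds a linear function of $m$, one can always locate a genuinely redundant vertex; this is the crux. I would attack it through the finer upper bound $H\le\ShannonEntropy$ together with the fractional clique-cover lower bound, exploiting the matched/unmatched structure (and, if needed, the Gallai--Edmonds decomposition of $G^\ast$) to show that an over-large unmatched part forces a vertex whose deletion moves neither bound. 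Granting a bound of the shape $|V(G^\ast)|=O(m)$, the core is one of at most $2^{2m(3m-2)}$ graphs, and summing this count over the admissible pairs $(h,m)$ gives the stated inequality.
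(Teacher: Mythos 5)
Your skeleton matches the paper's: pinning $m=\matching(G)$ in the range $\lfloor h/2\rfloor \le m \le h-1$ via $\matching(G)\le H(G)\le\feedback(G)\le 2\matching(G)$, passing to a minimal representative of each fractional value, and counting graphs on $O(m)$ vertices. But the step you flag as the crux and then assume (``Granting a bound of the shape $|V(G^\ast)|=O(m)$\dots'') is the entire mathematical content of the theorem; everything around it is bookkeeping, so this is a genuine gap rather than an alternative proof. The paper closes that gap with two lemmas. Lemma \ref{lem:claim}: if $G$ is entropy-minimal (no smaller graph realises the same fractional part of the entropy), then there is no nonempty $S\subseteq V$ such that $G[c(S),S]$ has a matching saturating $S$, where $c(S)$ is the set of vertices outside $S$ all of whose neighbours lie in $S$. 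The proof shows that such a matching forces the \emph{exact} additive decomposition $H(G)=|S|+H(G-(c(S)\cup S))$: the upper bound comes from $I$-reducing the independent set $c(S)$, which creates a loop on every vertex of $S$ (via paths $s_i,c_i,s_i$) and then peeling off the loops; the lower bound comes from extending an optimal FDS on $G-(c(S)\cup S)$ by letting each matched pair $c_i s_i$ copy each other. Lemma \ref{lem:bipartite} (a Hall-type argument): any nonempty bipartite graph $(A,B,E)$ with $|A|\ge|B|$ contains a nonempty $A'\subseteq A$ such that $G[A',N(A')]$ has a matching saturating $N(A')$. Applying Lemma \ref{lem:bipartite} with $A=c(M)$ and $B=M$ (where $M$ is the vertex set of a maximum matching, so $V\setminus M$ is independent and equals $c(M)$) shows that $|c(M)|\ge 2m$ would contradict Lemma \ref{lem:claim}; hence $|c(M)|<2m$, $|V|<4m$, and the count follows. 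Note also that the stated numerical bound needs exactly $|c(M)|<2m$ — giving $2^{2m(m-1)}$ choices for $G[M]$ (the matching edges being fixed) times $2^{2m(2m-1)}$ for $G[c(M),M]$ — not merely $|V(G^\ast)|=O(m)$.

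Beyond being missing, the lemma is unlikely to be reachable by the tools you propose. Your two reduction operations — simplicial elimination and deletion of a single vertex $v$ with $H(G-v)=H(G)$ — are not the operations that work: the paper's reduction removes a whole set $c(S)\cup S$ at once and shifts the entropy by the integer $|S|$, and there is no reason a large entropy-minimal graph must contain a single vertex whose deletion leaves $H$ unchanged; your assertion that ``redundant vertices exist once the graph is large'' is unsupported and is essentially a restatement of what must be proved. Likewise, pinching $H$ between $\ShannonEntropy(G)$ and $n-\fractionalCliqueCover(G)$ cannot identify entropies in general, since these bounds do not coincide (the paper only exploits such a coincidence in small ad hoc cases, e.g.\ for $G_1$ in Theorem \ref{th:first_values}); what the argument actually requires is exact additivity of $H$ under the matching decomposition, which comes from the $I$-reduction inequality plus an explicit fixed-point construction, not from entropy inequalities or Gallai--Edmonds structure.
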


\begin{lemma} \label{lem:bipartite}
Let $G = (A,B,E)$ be a nonempty bipartite graph with $|A| \ge |B| = n$. Then there exists a nonempty $A' \subseteq A$ such that $G[A', N(A')]$ has a matching saturating $N(A')$.
\end{lemma}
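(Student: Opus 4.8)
The plan is to rephrase the conclusion as a Hall-type condition and then produce the required $A'$ by an extremal, deficiency-minimizing choice. The first observation is that $G[A', N(A')]$ has a matching saturating $N(A')$ precisely when Hall's condition holds for the side $N(A')$, namely that for every $T \subseteq N(A')$ one has $|N(T) \cap A'| \ge |T|$, where $N(T) \cap A'$ denotes the set of neighbours of $T$ lying in $A'$. So the whole task reduces to exhibiting one nonempty $A' \subseteq A$ for which this inequality never fails.

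To find such an $A'$, I would consider the deficiency function $f(S) = |N(S)| - |S|$ on subsets $S \subseteq A$, and let $A'$ be a nonempty subset minimizing $f$. The hypothesis $|A| \ge |B|$ enters exactly here: since $N(A) \subseteq B$ we have $f(A) = |N(A)| - |A| \le |B| - |A| \le 0$, so the minimum value satisfies $f(A') \le f(A) \le 0$, a nonpositivity I will need at the very end. This $A'$ is nonempty by construction, so it remains only to check that it fulfils the Hall condition above.

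The key step, which I expect to be the main obstacle, is verifying Hall's condition for this $A'$ by contradiction. Suppose some $T_0 \subseteq N(A')$ satisfies $|N(T_0) \cap A'| < |T_0|$, and set $A_0 = N(T_0) \cap A'$. Every vertex of $T_0$ has all of its $A'$-neighbours inside $A_0$, so no vertex of $A' \setminus A_0$ is adjacent to $T_0$; hence $N(A' \setminus A_0) \subseteq N(A') \setminus T_0$. Counting then gives $f(A' \setminus A_0) \le f(A') - (|T_0| - |A_0|) < f(A')$, using $|T_0| > |A_0|$. If $A' \setminus A_0$ is nonempty this contradicts the minimality of $f(A')$ over nonempty sets; if $A' \setminus A_0 = \emptyset$ then $f(A' \setminus A_0) = f(\emptyset) = 0$, so $0 < f(A') \le 0$, again a contradiction. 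Thus Hall's condition holds and the desired matching exists. I would write the inequality for $f(A' \setminus A_0)$ out in full, since it is the crux: it packages the submodular behaviour of $|N(\cdot)|$ into a single exchange bound, and establishing the containment $N(A' \setminus A_0) \subseteq N(A') \setminus T_0$ correctly is the one place where an error could slip in.
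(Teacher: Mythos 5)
Your proof is correct, and it takes a genuinely different route from the paper's. The paper argues by minimal counterexample on $n = |B|$: if some nonempty $A' \subset A$ satisfies $|N(A')| \le |A'| < n$, it restricts to the bipartite graph $G[A', N(A')]$, whose $B$-side is strictly smaller and in which neighbourhoods of subsets of $A'$ coincide with those in $G$, and invokes minimality; if no such set exists, it applies Hall's theorem on the side $A$, obtains a matching saturating $A$, deduces $|A| = n$, and takes $A' = A$, the matching then saturating $B = N(A)$. You instead make a single extremal choice --- minimize the deficiency $f(S) = |N(S)| - |S|$ over nonempty $S \subseteq A$ --- and verify Hall's condition for the side $N(A')$ of the minimizer directly via the exchange bound $f(A' \setminus A_0) \le f(A') - \left(|T_0| - |A_0|\right)$; your containment $N(A' \setminus A_0) \subseteq N(A') \setminus T_0$ is justified correctly, and the degenerate case $A_0 = A'$ is settled by $f(\emptyset) = 0$ together with $f(A') \le f(A) \le 0$, the only point where the hypothesis $|A| \ge |B|$ enters. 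Both arguments ultimately rest on Hall's theorem, but applied to opposite sides: the paper matches $A$ into $B$ and must then convert this into a matching saturating $N(A)$, whereas you match $N(A')$ into $A'$, which is the direction the conclusion actually demands. Your approach buys a non-recursive, self-contained argument that exhibits the witness explicitly as a deficiency minimizer (the tight-set trick behind defect forms of Hall's theorem); notably, it handles the case $|A| > n$ without any extra work --- for instance $G = K_{m,n}$ with $m > n$, where no nonempty set with $|N(A')| \le |A'| < n$ exists and yet Hall's condition on the side $A$ fails for $A$ itself, so the paper's final Hall step needs repair in that case, while your nonpositivity observation $f(A') \le 0$ covers it automatically. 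The paper's induction is shorter on the page, but only at the cost of glossing over exactly that situation.
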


\begin{proof}
This is clearly true for $n = 1$. Let $G$ be a counterexample with minimal $n \ge 2$. Suppose there exists a nonempty $A' \subset A$ such that $|N(A')| \le |A'| < n$, and denote $D = G[A'' \cup N(A'')] = G[A'', N(A'')]$. We remark that for any $A'' \subseteq A'$, we have $N(A''; D) = N(A'; D)$. By minimality hypothesis there exists $A'' \subseteq A'$ such that $D[A'', N(A''; D)] = G[A'', N(A''; G)]$ has a matching saturating $N(A''; D) = N(A'';G)$. Hence for all $A' \subseteq A$, $|N(A')| \ge |A'|$ and by Hall's marriage theorem there exists a matching in $G$ saturating $A$. Thus $|A| = n$ and we obtain a matching in $G[A, N(A)] = G$ saturating $B = N(A)$.
\end{proof}

We say $G$ is \BF{entropy-minimal} if for any $G'$ with $|G'| < |G|$, $H(G') - \lfloor H(G') \rfloor \ne H(G)  - \lfloor H(G) \rfloor$. For any nonempty $S \subseteq V$, we denote 
$$
	c(S) := \{v \in V \setminus S : N(v) \subseteq S\}.
$$
We remark that $c(S)$ is an independent set.

\begin{lemma} \label{lem:claim}
Let $G$ be entropy-minimal. Then for any nonempty $S \subseteq V$, $G[c(S), S]$ does not have a matching saturating $S$.
\end{lemma}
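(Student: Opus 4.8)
The plan is to prove the contrapositive: assuming some nonempty $S$ admits a matching in $G[c(S),S]$ saturating $S$, I will exhibit a graph on strictly fewer vertices with the same fractional entropy, contradicting entropy-minimality. Write $\phi\colon S \to c(S)$ for the injection recorded by the matching, let $M = \phi(S)$ be the set of saturated vertices on the $c(S)$ side (so $|M| = |S|$ and $s\phi(s) \in E$), and set $R = G - (S \cup M)$. Since every $m \in M \subseteq c(S)$ has $N(m) \subseteq S$, the set $M$ is independent and no vertex of $M$ is adjacent to $R$. The target claim is $H(G) = H(R) + |S|$: as $|R| = |G| - 2|S| < |G|$ and the two entropies then differ by the integer $|S|$, this will force $H(R) - \lfloor H(R) \rfloor = H(G) - \lfloor H(G) \rfloor$ and contradict entropy-minimality.

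For the upper bound I would apply the induced-subgraph inequality to the partition $V = (V \setminus S) \sqcup S$, giving $\guessing{G}{q} \le \guessing{G - S}{q} + |S|$. In $G - S$ each $m \in M$ is isolated, because its entire neighbourhood lay in the deleted set $S$; hence $G - S$ is the disjoint union of $R$ with $|M|$ isolated vertices, and additivity of the entropy over connected components gives $H(G - S) = H(R)$. Letting $q \to \infty$ yields $H(G) \le H(R) + |S|$.

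The substance is the matching lower bound $H(G) \ge H(R) + |S|$, which I would obtain by an explicit construction. Fix $q$ and an optimal $f^R \in \functions{R}{q}$, so $|\Fix(f^R)| = q^{\guessing{R}{q}}$. Define $f \in \functions{G}{q}$ by setting $f_v = f^R_v$ for $v \in R$ (reading only neighbours in $R$), $f_s(x) = x_{\phi(s)}$ for $s \in S$, and $f_m(x) = x_{\phi^{-1}(m)}$ for $m \in M$. Each function depends only on a genuine neighbour in $G$ (the matching edge $s\phi(s)$ for the latter two families), so $\IG(f) \subseteq G$. A state $x$ is fixed precisely when $x|_R \in \Fix(f^R)$ and $x_s = x_{\phi(s)}$ for every $s \in S$ (the conditions at the $M$-vertices merely repeat those at $S$). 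These two requirements constrain disjoint blocks of coordinates, so they decouple: $|\Fix(f)| = q^{\guessing{R}{q}} \cdot q^{|S|}$, whence $\guessing{G}{q} \ge \guessing{R}{q} + |S|$, and the limit in $q$ finishes the inequality.

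I expect the lower bound to be the only real obstacle, and the key idea there is to delete not merely the partners $M$ but $S$ as well: the matching converts each pair $\{s,\phi(s)\}$ into an independent $K_2$-type feedback loop contributing exactly one free coordinate, while the induced-subgraph bound certifies that nothing more can be squeezed out. Combining the two inequalities gives $H(G) = H(R) + |S|$ with $|S| \in \mathbb{Z}$, so $R$ is a strictly smaller graph sharing the fractional entropy of $G$, contradicting minimality; in the degenerate case $R = \emptyset$ one simply notes $H(G) = |S|$ is an integer and compares instead with a single vertex.
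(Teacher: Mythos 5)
Your proof is correct, and its core -- the matching-based lower bound $H(G) \ge H(R) + |S|$, obtained by extending an optimal function on the smaller graph with the identity-swap functions $f_s(x) = x_{\phi(s)}$, $f_m(x) = x_{\phi^{-1}(m)}$ along the matching edges -- is essentially identical to the paper's construction. Where you genuinely diverge is the upper bound. The paper proves $H(G) \le |S| + H(G - d(S))$ (with $d(S) = S \cup c(S)$) by invoking the $I$-reduction machinery of \cite{GRF15}: it reduces the independent set $c(S)$, observes that the matching edges turn every vertex of $S$ into a looped vertex of $G^{-c(S)}$, and then strips the loops via $\guessing{D}{q} = |L| + \guessing{D-L}{q}$. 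You instead delete $S$ outright using the elementary partition inequality $\guessing{G}{q} \le \guessing{G-S}{q} + |S|$ and note that $G - S$ is your $R$ together with isolated vertices, so additivity over disjoint unions finishes the bound. The two routes yield the same quantity (your $R$ is the paper's $G - d(S)$ plus the unmatched vertices of $c(S)$, which are isolated and contribute zero entropy), but yours is more self-contained, needing only the basic inequalities from the background section rather than the reduction theorem. A further small point in your favor: you explicitly handle the degenerate case $R = \emptyset$ by comparing against a single vertex, a corner the paper's proof silently skips over.
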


\begin{proof}
Suppose there exists $S$ such that $G[c(S), S]$ has a matching saturating $S$. Denoting $d(S) = c(S) \cup S$, we prove that $H(G) = |S| + H(G - d(S))$, which contradicts entropy-minimality of $G$. By reducing $c(S)$, we obtain $H(G) \le H(G^{-c(S)})$. Since $G^{-c(S)}$ has a loop on every vertex in $S$ and $G^{-c(S)}- S = G - d(S)$, we obtain 
$$
	H(G) \le H(G^{-c(S)}) = |S| + H(G^{-c(S)}- S) = |S| + H(G - d(S)).
$$ 
For the reverse inequality, let $f \in \functions{G}{q}$ achieve the guessing number for $G - d(S)$. We extend it to $f' \in \functions{G}{q}$ by considering the matching in $G[c(S), S]$ saturating $S$. Let $c_1s_1, \dots, c_ks_k$ be the matching saturating $S$ where $c_i \in c(S)$ and $s_i \in S$, then
$$
	f'_v(x) = \begin{cases}
		f_v(x_{V \setminus d(S)}) &\text{if } v \in V \setminus d(S)\\
		x_{c_i} &\text{if } v = s_i \quad 1 \le i \le k\\
		x_{s_i} &\text{if } v = c_i \quad 1 \le i \le k\\
		0 &\text{otherwise.}
	\end{cases}
$$
Then $\guessingf(f') = \guessing(f) + k$ and we obtain $H(G) \ge |S| + H(G - d(S))$. 
\end{proof}

\begin{proof}[Proof of Theorem \ref{th:entropy_finite}]
Let $G$ be entropy-minimal with $H(G) \in (h-1, h)$ for some $h \le k$. Let $M$ be the vertices in a maximum matching of $G$. We then have $h \le 2m = |M| \le 2(h-1)$. The induced subgraph $G - M$ is an independent set, hence $c(M) = V \setminus M$ and the edges in $G$ either belong to $G[M]$ or to $G[c(M), M]$. Moreover, for every $c' \subseteq c(M)$, $N(c'; G) = N(c'; G[c(M), M])$. If $|c(M)| \ge |M|$, then by Lemma \ref{lem:bipartite}, there exists $c' \subseteq c(M)$ such that $G[c', N(c'; G)]$ has a matching saturating $N(c';G)$ and by Lemma \ref{lem:claim}, $G$ is not entropy-minimal. Therefore, $|c(M)| < 2m$. 

Altogether, there are at most $2^{2m(m-1)}$ choices for $G[M]$, at most $2^{2m(2m-1)}$ choices for $G[c(M), M]$ and hence at most $2^{2m(3m-2)}$ choices for $G$. It is clear that choosing $G$ empty, apart from the matching in $M$, yields an entropy of $h-1$. Also, choosing $G$ to be empty, apart from the matching in $M$ and two more edges $cm_1, cm_2$ where $c \in c(M)$ and $m_1m_2$ is an edge of the matching yields an entropy of $h$. Thus, adding up for all possible values of $h$ and $m$, we obtain the result.
\end{proof}

In Question \ref{q:H'} below, we then ask the corresponding question for the set of entropies of digraphs. The closure solvability problem \cite{Gad13} generalises the network coding solvability problem. Then for any closure operator on $V$ we can associate its entropy; the entropy of a digraph is then the entropy of the corresponding closure operator. The analogue of Question \ref{q:H'} for the set $\mathcal{H}'' \supseteq \mathcal{H}$ of entropies of closure operators was answered negatively in \cite{Gad14}: $\mathcal{H}'' \cap [0,k]$ contains all rational numbers between $1$ and $k$.

\begin{question} \label{q:H'}
Let $\mathcal{H}' = \{H(D) : G \text{ digraph, finite}\}$. Is $\mathcal{H}' \cap [0,k]$ finite for all integers $k$?
\end{question}

We are now interested in determining the first values in $\mathcal{H}$. The first three values in $\mathcal{H}$ are already known. We have $\mathcal{H} \cap [0,1] = \{0,1\}$, since a graph is either empty (and its entropy is equal to zero) or it contains an edge (and its entropy is at least one). The entropy of any directed graph greater than one is at least equal to two \cite{Gad13}. We shall give an elementary and short proof of this result for undirected graphs (namely $\mathcal{H} \cap [1,2] = \{1,2\}$). We then determine the values in $\mathcal{H}$ up to the value four. We also show that if $G$ is a connected undirected graph with entropy $5/2$, then $G$ is the pentagon; similarly, if $G$ is a connected undirected graph with entropy $11/3$, then $G$ is the graph $G_1$ displayed on Figure \ref{fig:graphs}.

\begin{theorem} \label{th:first_values}
$\mathcal{H} \cap [0, 4] = \{0, 1, 2, 5/2, 3, 7/2, 11/3, 4\}$.
\end{theorem}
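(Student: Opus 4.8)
The plan is to prove the two inclusions separately. For $\supseteq$, I would exhibit a witness for each claimed value: the empty graph gives $0$; since $\mathcal{H}$ is closed under disjoint union and $H(K_{n+1}) = n$, the integers $1,2,3,4$ are realised by cliques (or by disjoint edges); the pentagon gives $H(C_5) = 5/2$; the disjoint union $C_5 \cup K_2$ gives $H(C_5) + H(K_2) = 7/2$ (equivalently $H(C_7) = 7/2$); and the graph $G_1 = \overline{C_7}$ of Figure~\ref{fig:graphs} gives $H(\overline{C_7}) = 2\cdot 3 - 1 - 1/3 = 11/3$. All of these are immediate from the background results, so $\{0,1,2,5/2,3,7/2,11/3,4\} \subseteq \mathcal{H} \cap [0,4]$.

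The substance is the reverse inclusion, for which I would exploit the reduction underlying Theorem~\ref{th:entropy_finite} in its \emph{downward} direction. Starting from any $G$, as long as some nonempty $S$ has $G[c(S),S]$ saturating $S$, the identity in Lemma~\ref{lem:claim} replaces $G$ by $G - d(S)$ and lowers the entropy by the integer $|S|$ while preserving its fractional part; iterating until no such $S$ remains yields a \emph{reduced} graph $G'$ with $H(G') \le H(G)$ and the same fractional part. Hence every element of $\mathcal{H}$ equals $H(G') + t$ for a reduced $G'$ and an integer $t \ge 0$, and (as in the proof of Theorem~\ref{th:entropy_finite}, where ``reduced'' plays the role of ``entropy-minimal'' via Lemma~\ref{lem:bipartite}) a reduced graph with $H(G') \in (h-1,h)$ has a maximum matching $M$ of size $m$ with $h \le 2m \le 2(h-1)$ and $|c(M)| < 2m$, so $|V(G')| = 2m + |c(M)| < 4m$. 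It therefore suffices to classify the entropies of all reduced graphs with non-integer $H \le 4$ and then reassemble by adding copies of $K_2$.

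I would then argue interval by interval. For $h=1$ the constraint $2m \le 0$ is unsatisfiable, so $\mathcal{H} \cap (0,1) = \emptyset$. For $h=2$ we get $m=1$ and $|V(G')| \le 3$; a direct check of the few graphs on at most three vertices with matching number one shows each has integer entropy, so $\mathcal{H} \cap (1,2) = \emptyset$ (recovering $\mathcal{H}\cap[1,2]=\{1,2\}$). For $h=3$ we get $m=2$ and $|V(G')| \le 7$, and for $h=4$ we get $m \in \{2,3\}$ with $|V(G')| < 4m \le 12$. In each surviving case I would evaluate $H(G')$ with the sandwich $n - \pi^*(G') \le H(G') \le \min\{\tau(G'), \eta(G')\}$: for the great majority of candidates these bounds coincide at an integer and discard the graph, while any graph whose bounds leave a gap must be matched to the odd-cycle families via induced-subgraph monotonicity, the loop identity, and the $I$-reduction bound. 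The outcome should be that the only non-integer reduced values are $5/2$ (from $C_5$) in $(2,3)$ and $7/2$ (from $C_7$) and $11/3$ (from $\overline{C_7}$) in $(3,4)$; reassembling with $K_2$'s within $[0,4]$ leaves exactly $\{5/2,7/2,11/3\}$. The uniqueness statements for connected graphs ($5/2 \Rightarrow C_5$ and $11/3 \Rightarrow G_1=\overline{C_7}$) fall out of the same enumeration.

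The main obstacle is the $h=4$ enumeration with $m=3$ and up to $11$ vertices: the crude bound $2^{2m(3m-2)}$ from Theorem~\ref{th:entropy_finite} is astronomically large, so a blunt case split is hopeless and the real work is aggressive pruning. I would prune using reducedness itself (no $S$ with $G[c(S),S]$ saturating $S$, which by Lemma~\ref{lem:bipartite} forces $|c(M)|<2m$ and kills all graphs where the independent part is too large) together with the immediate elimination of every graph whose $n-\pi^*(G')$ and $\min\{\tau(G'),\eta(G')\}$ already agree. The delicate endgame is to show that among the handful of genuinely fractional survivors no new fractional part can arise in $(3,4)$ — in particular ruling out any value with denominator other than $2$ or $3$, and confirming that $2/3$ first appears only at $11/3$ — which is where the exact Shannon-entropy computation meeting a fractional-clique-cover lower bound will be needed.
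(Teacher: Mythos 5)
Your containment $\supseteq$ breaks down at the only value that makes the theorem interesting. The graph $G_1$ of Figure \ref{fig:graphs} is \emph{not} $\overline{C_7}$: it has $9$ edges and degree sequence $(3,3,3,3,2,2,2)$, whereas $\overline{C_7}$ is $4$-regular with $14$ edges. Your arithmetic is also off: the background formula gives $H(\overline{C_7}) = 2\cdot 3 - 1 - 1/3 = 14/3$, not $11/3$, and since $14/3 > 4$ this graph cannot witness any value in $[0,4]$ at all. In fact no graph appearing in the background material has entropy $11/3$; producing a witness is a substantive part of the paper's proof, namely the final claim that $H(G_1) = 11/3$, which requires two nontrivial computations: a fractional clique cover of $G_1$ of total weight $10/3$ (giving $H(G_1) \ge 7 - 10/3 = 11/3$) and the Shannon-entropy evaluation $\eta(G_1) = 11/3$ (giving the matching upper bound). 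So the inclusion $\{0,1,2,5/2,3,7/2,11/3,4\} \subseteq \mathcal{H}$ is not ``immediate from the background results''; it is false as you argue it.

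For the reverse inclusion, your reduction step is sound and genuinely parallels the paper: iterating the identity $H(G) = |S| + H(G - d(S))$ from Lemma \ref{lem:claim}, and invoking Lemma \ref{lem:bipartite} exactly as in the proof of Theorem \ref{th:entropy_finite}, legitimately confines attention to graphs on at most $4m-1 \le 11$ vertices. But from there you have a plan, not a proof. For $h = 4$, $m = 3$ your own count allows on the order of $2^{42}$ candidates; ``aggressive pruning'' and ``evaluate the sandwich $n - \pi^* \le H \le \min\{\tau,\eta\}$'' are not checkable arguments (computing $\eta$ is itself a linear program with exponentially many constraints, one per pair of subsets), and you explicitly concede that the ``delicate endgame''--showing no fractional part other than $1/2$ and $2/3$ arises in $(3,4)$--is unresolved. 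The paper closes exactly this gap with structural arguments that are absent from your proposal: a graph with non-integer entropy is imperfect, hence contains an induced $C_5$ or $C_7$ (all larger odd holes and all odd antiholes already have entropy exceeding $4$); Lemma \ref{lem:wheel} determines what a single extra vertex can do to a $C_5$; Claims \ref{claim:one_edge}--\ref{claim:G(67)} use entropy-minimality together with matchings and disjoint cliques to force $n = 7$ with exactly one edge outside the pentagon; and a final case analysis on $N(v_6)$, $N(v_7)$ leaves only the six graphs $G_1,\dots,G_6$, whose entropies are then computed exactly. Without these (or equivalent) structural ideas, the enumeration you describe cannot be completed as written.
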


\begin{lemma} \label{lem:wheel}
Let $G$ be a graph on six vertices $V = \{v_1, \dots, v_6\}$ where $G[\{v_1, \dots, v_5\}] = C_5$. Then $H(G) = 2.5$ if $v_6$ is isolated, $H(G) = 3.5$ if $N(v_6)$ contains three consecutive vertices of $C_5$, and $H(G) = 3$ otherwise.
\end{lemma}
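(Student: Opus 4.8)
The plan is to follow the trichotomy in the statement, treating each case with a different pair of bounds. Throughout write $R = \{v_1,\dots,v_5\}$ for the rim and recall the known value $H(C_5)=2.5$. The isolated case is immediate: if $v_6$ is isolated then $G = C_5 \cup \{v_6\}$ is a disjoint union, so $H(G) = H(C_5) + H(\{v_6\}) = 2.5 + 0 = 2.5$ by additivity of the entropy. So the work is in the other two cases, and the clean strategy is: lower bounds from matchings / fractional clique covers, and upper bounds from vertex covers (the ``otherwise'' case) or from the Shannon entropy (the ``three consecutive'' case).

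For the ``otherwise'' case (so $v_6$ is not isolated and $N(v_6)$ contains no three consecutive rim vertices) I would show $\mu(G) = \tau(G) = 3$, which by $\eqref{eq:clique_cover}$ and the transversal bound forces $g(G,q)=3$ for every $q$, hence $H(G)=3$. The bound $\mu(G)\ge 3$ is clear: take any edge $v_6v_i$; the four remaining rim vertices form a path $P_4$, which has a perfect matching, so $G$ has a perfect matching. For $\tau(G)\le 3$, note that a $3$-subset of $C_5$ is a vertex cover of $C_5$ exactly when it is not a set of three consecutive vertices. Since $N(v_6)$ contains no three consecutive vertices and $\alpha(C_5)=2$, there is an independent pair of rim vertices disjoint from $N(v_6)$; its complement in $R$ is then a vertex cover of $C_5$ of size $3$ containing $N(v_6)$, hence a vertex cover of all of $G$. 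With $\mu(G)\le\tau(G)$ this closes the sandwich.

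For the ``three consecutive'' case the lower bound $H(G)\ge 3.5$ comes from the fractional clique cover bound $H(G)\ge |V|-\pi^*(G)$. Assuming without loss of generality that $v_1,v_2,v_3\in N(v_6)$, the two triangles $\{v_1,v_2,v_6\}$ and $\{v_2,v_3,v_6\}$ together with the three rim edges $v_3v_4$, $v_4v_5$, $v_5v_1$, each given weight $1/2$, cover every vertex with total weight $2.5$; thus $\pi^*(G)\le 2.5$ and $H(G)\ge 6-2.5 = 3.5$. Any graph of this case contains such a fan as a subgraph (after rotating the cycle), so monotonicity of $H$ under subgraphs gives $H(G)\ge 3.5$ in general.

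The main obstacle is the matching upper bound $H(G)\le 3.5$ in this last case, which I would obtain from the Shannon entropy via $H(G)\le\eta(G)$ together with the claim $\eta(G)\le\eta(C_5)+1=3.5$. Given a feasible $h$ for $\eta(G)$, I would pass to the conditioned set function $h'(S)=h(S\cup\{v_6\})-h(\{v_6\})$ on $R$; conditioning preserves monotonicity, submodularity and $h'(v)\le 1$, so $h'$ is feasible for $C_5$ as soon as each rim vertex $v_i$ is determined by its two cycle-neighbours in $h'$. This holds in both adjacency subcases: if $v_6\in N(v_i)$ the defining equality $h(N(v_i)\cup\{v_i\})=h(N(v_i))$ already involves $v_6$ and collapses to $h'(\{v_{i-1},v_i,v_{i+1}\})=h'(\{v_{i-1},v_{i+1}\})$; if $v_6\notin N(v_i)$ the $C_5$-constraint lifts across $v_6$ by submodularity and monotonicity. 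Hence $h'(R)\le\eta(C_5)=2.5$, and
$$
	h(V) = h(R\cup\{v_6\}) = h'(R) + h(\{v_6\}) \le 2.5 + 1 = 3.5 .
$$
This conditioning step is the crux; the only delicate point is verifying that the neighbourhood equality survives conditioning in both subcases, which is precisely where the ``three consecutive'' hypothesis drops out and the upper bound becomes uniform over the whole case. Combining with the lower bound yields $H(G)=3.5$.
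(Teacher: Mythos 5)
Your proof is correct, and three of its four bounds coincide with the paper's: the isolated case via additivity, the ``otherwise'' case via the sandwich $\mu(G) = \tau(G) = 3$ (the paper likewise observes that $N(v_6)$ is contained in a size-$3$ vertex cover of $C_5$ and exhibits a matching of size $3$), and the lower bound $H(G) \ge 6 - \pi^*(G) \ge 3.5$ in the three-consecutive case via exactly the fractional clique cover you wrote down. The one place you genuinely diverge is the upper bound $H(G) \le 3.5$: the paper disposes of it in one line by invoking the general inequality from Section \ref{sec:background}, namely $\guessing{D}{q} \le \guessing{D[V_1]}{q} + |V_2|$ with $V_1 = \{v_1,\dots,v_5\}$ and $V_2 = \{v_6\}$, giving $H(G) \le H(C_5) + 1$ directly; you instead re-derive the Shannon-entropy analogue $\ShannonEntropy(G) \le \ShannonEntropy(C_5) + 1$ from scratch by conditioning on $v_6$, i.e.\ passing to $h'(S) = h(S \cup \{v_6\}) - h(\{v_6\})$ and checking that feasibility for $G$ implies feasibility for $C_5$ (your two subcases for the functional constraint are both verified correctly: the constraint at $v_i$ collapses when $v_6 \in N(v_i)$, and lifts across $v_6$ by submodularity plus monotonicity when $v_6 \notin N(v_i)$). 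This buys self-containedness — you effectively prove the general principle that deleting one vertex lowers the Shannon entropy by at most $1$, which is the LP counterpart of the cited inequality — at the cost of length, and it silently uses the normalisation $h(\emptyset) = 0$ (needed for $h'(v) \le 1$), which is implicit in the paper's definition of $\ShannonEntropy$ but worth stating. Given that the paper already provides the induced-subgraph inequality in its background section, the citation route is the economical one; your argument is a valid, more laborious substitute for that single step.
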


\begin{proof}
The result is clear when $v_6$ is isolated. Suppose that $N(v_6)$ contains three consecutive vertices, say $v_1$, $v_2$, and $v_3$. We have $H(G) \le H(C_5) + 1 = 3.5$. Conversely, by fractional clique packing, we obtain $H(G) \ge 3.5$. Otherwise, $N(v_6) \subseteq S$, where $S$ is a minimum feedback vertex set of $G[\{v_1, \dots, v_5\}]$. Therefore, $S$ is a minimum feedback vertex set of $G$ of size $3$, thus $H(G) \le 3$. Conversely, say $v_5v_6 \in E$, then $\{v_1v_2, v_3v_4, v_5v_6\}$ is a matching of size $3$, thus $H(G) \ge 3$.
\end{proof}

\begin{proof}[Proof of Theorem \ref{th:first_values}]
We consider a graph $G$ such that $H(G') \ne H(G)$ for all $|G'| < |G|$. By minimality, $G$ has no isolated vertices. By construction, we have $\{0, 1, 2, 2.5, 3, 3.5, 4\} \subseteq \mathcal{H} \cap [0, 3.5]$. We now prove that the only other value in $\mathcal{H} \cap [0, 4]$ is $11/3$.

Suppose $G$ has entropy between $1$ and $2$. Let $ab \in E$, then all the other vertices must be adjacent to $a$ (or by symmetry, to $b$), i.e. $V = N(a) \cup a$. The neighbourhood of $a$ must form an independent set, for if $b, c \in N(a)$ are adjacent, then $G$ contains the triangle $a,b,c$ and $H(G) \ge 2$. But then, $\{a\}$ is a vertex cover and $H(G) \le 1$.

Suppose $G$ has entropy between $2$ and $3$. Since $G$ is not perfect, it contains $C_5$ as an induced subgraph. If $G = C_5$, then $H(G) = 2.5$. Otherwise, Lemma \ref{lem:wheel} shows that $H(G) \ge 3$.

Suppose $G$ has entropy between $3$ and $4$. Then $G$ is entropy-minimal. If $G$ contains $C_7$ as an induced subgraph, then either $G = C_7$ and $H(G) = 3.5$ or $G$ has eight or more vertices, and Lemma \ref{lem:wheel} indicates that $H(G) \ge 4$. Thus, $G$ contains $C_5$ as an induced subgraph, say by vertices $v_1,\dots,v_5$. The rest of the proof will show that the entropy of $G$ is $11/3$. The first step, culminating in Claim \ref{claim:G(67)} characterises $G[\{v_6, \dots, v_n\}]$.

\begin{claim} \label{claim:one_edge}
$G[\{v_6, \dots, v_n\}]$ has at most one edge.
\end{claim}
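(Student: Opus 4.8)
The plan is to assume, for contradiction, that $G[\{v_6,\dots,v_n\}]$ contains two edges and to show that this forces either $H(G) \ge 4$ or a violation of entropy-minimality. Three facts drive the argument: first, $H(G) \ge \matching(G)$ by \eqref{eq:clique_cover}, so exhibiting a matching of size $4$ is already a contradiction; second, $C_5$ has a matching of size $2$, and deleting any single vertex of $C_5$ leaves a path $P_4$ that still has a matching of size $2$; third, induced-subgraph monotonicity $H(G) \ge H(G[W])$ together with additivity over disjoint unions lets me read off the entropy of small induced pieces.

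I would first dispose of the case where $G[\{v_6,\dots,v_n\}]$ contains two \emph{disjoint} edges: since these lie entirely outside $C_5$, they extend the size-$2$ matching of $C_5$ to a matching of size $4$, giving $H(G) \ge 4$. Thus I may assume the edge set of $G[\{v_6,\dots,v_n\}]$ is pairwise intersecting, and the elementary classification of intersecting families of edges leaves only two shapes: a star $K_{1,t}$ with $t \ge 2$ centred at some vertex $c$, or a triangle.

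For the star I would split on whether a leaf is adjacent to $C_5$. If some leaf $\ell_i$ has a neighbour $v_j \in C_5$, then the edges $v_j\ell_i$ and $c\ell_{i'}$ (for a second leaf $\ell_{i'}$) are disjoint and avoid $C_5 - v_j$, so together with the size-$2$ matching of $C_5 - v_j$ they form a matching of size $4$. Otherwise every leaf satisfies $N(\ell_i) = \{c\}$, so each leaf lies in $c(\{c\})$ and $G[c(\{c\}),\{c\}]$ has a matching saturating $\{c\}$; Lemma \ref{lem:claim} then contradicts entropy-minimality. The triangle case is analogous: if one of its vertices is adjacent to some $v_j \in C_5$, a triangle edge together with the cross edge and the matching of $C_5 - v_j$ again yields a matching of size $4$; otherwise the triangle is disjoint from $C_5$, so $G$ contains the induced subgraph $C_5 \sqcup K_3$, whence $H(G) \ge H(C_5) + H(K_3) = 5/2 + 2 > 4$.

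I expect the main obstacle to be precisely the subcases where the star's leaves or the triangle's vertices are \emph{not} adjacent to $C_5$: there the matching bound only reaches $3$, so the contradiction cannot come from \eqref{eq:clique_cover} alone and must instead be extracted either from entropy-minimality via Lemma \ref{lem:claim} (for the star) or from the disjoint-union value $H(C_5 \sqcup K_3) = 9/2$ (for the triangle). Some care is also needed to confirm that, under the intersecting-family assumption, a leaf or a triangle vertex has no further neighbour inside $\{v_6,\dots,v_n\}$, so that its only possible outside neighbours lie in $C_5$; this is what makes the two clean subcases exhaustive.
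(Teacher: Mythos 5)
Your proof is correct and follows essentially the same route as the paper's: eliminate two disjoint edges or a triangle, reduce to a star, then invoke Lemma \ref{lem:claim} with $S=\{c\}$ when every leaf's only neighbour is the centre, and exhibit a size-4 matching when some leaf meets the pentagon. The only cosmetic difference is that the paper dispatches the triangle and disjoint-edges cases in one stroke via $H(G) \ge H(C_5) + 2 \ge 4$ (subgraph monotonicity plus additivity over disjoint vertex sets needs no case split on cross edges to $C_5$), whereas you subdivide the triangle case; this is harmless but unnecessary.
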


\begin{proof}
Let $D = G[\{v_6, \dots, v_n\}]$ have at least two edges. If $D$ has a triangle or a matching of size two, then $H(G) \ge H(C_5) + 2 \ge 4$. Hence all the edges in $D$ are adjacent to one vertex, say $v_6$, and let $v_6v_7, v_6v_8 \in E$. If $N(v_7; G) = \{v_6\}$, then denoting $S = \{v_6\}$, we have $v_7 \in c(S)$ and hence the edge $v_6v_7$ is a matching in $G[c(S), S]$ saturating $S$, which contradicts Lemma \ref{lem:claim}. Therefore, $v_7$ is adjacent to some vertex in $\{v_1, \dots, v_5\}$, say $v_1v_7 \in E$. Then $\{v_2v_3, v_4v_5, v_1v_7, v_6v_8\}$ is a matching in $G$ and $H(G) \ge 4$.
\end{proof}

\begin{claim} \label{claim:exactly_one_edge}
$G[\{v_6, \dots, v_n\}]$ has exactly one edge.
\end{claim}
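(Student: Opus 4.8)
The plan is to argue by contradiction: assuming $D = G[\{v_6,\dots,v_n\}]$ has no edge at all — it has at most one by Claim~\ref{claim:one_edge} — I will show that $G$ cannot have entropy strictly between $3$ and $4$. The starting point is that $A := \{v_6,\dots,v_n\}$ is then an independent, hence acyclic, set, so the reduction bound $H(G) \le H(G^{-A})$ applies. Since $A$ is independent, the only directed paths through $A$ joining two vertices of the $C_5$ have the form $v_i, w, v_j$ with $w \in A$; thus $G^{-A}$ lives on $\{v_1,\dots,v_5\}$ and consists of the original $C_5$ together with exactly those chords $v_iv_j$ for which some $A$-vertex is adjacent to both $v_i$ and $v_j$.

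The key observation is that, apart from $C_5$ itself, every graph on five vertices is perfect. Hence $G^{-A}$ is either $C_5$ — in which case $H(G) \le H(C_5) = 5/2 < 3$ — or a perfect graph, for which $H(G^{-A}) = 5 - \cliqueCover(G^{-A})$ is an integer; it is at least $H(C_5) = 5/2$ since $C_5 \subseteq G^{-A}$ and at most $5-1 = 4$, so it lies in $\{3,4\}$. Because we are assuming $H(G) > 3$, the bound $H(G) \le H(G^{-A})$ forces $H(G^{-A}) = 4$, i.e. $\cliqueCover(G^{-A}) = 1$ and $G^{-A} = K_5$. In other words, \emph{every} one of the five chords of the $C_5$ must be covered by some $A$-vertex adjacent to both its endpoints; in particular $N(A) = \{v_1,\dots,v_5\}$. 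A separate, easy preliminary removes degenerate vertices: if some $w \in A$ had a unique neighbour $v_j$, then taking $S = \{v_j\}$ we would have $w \in c(S)$ and the single edge $wv_j$ saturating $S$, contradicting Lemma~\ref{lem:claim}; so every $A$-vertex has at least two neighbours in the $C_5$.

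It remains to convert the structural fact $G^{-A} = K_5$ into a contradiction. When $|A| \ge 5$ this is immediate: since $N(A) = \{v_1,\dots,v_5\}$ has five elements, Lemma~\ref{lem:bipartite} applied to the bipartite graph $(A, N(A))$ produces a nonempty $A' \subseteq A$ with a matching of $G[A', N(A')]$ saturating $N(A')$; setting $S = N(A')$ gives $A' \subseteq c(S)$ and a matching of $G[c(S),S]$ saturating $S$, contradicting Lemma~\ref{lem:claim}. The cases $2 \le |A| \le 4$ are where the real work lies (the values $|A| \in \{0,1\}$ are disposed of by $H(C_5) = 5/2 \notin (3,4)$ and by Lemma~\ref{lem:wheel}, which in the forced situation $G^{-A}=K_5$ yields $H(G) = 7/2$, already realised by the smaller graph $C_5$ and so breaking entropy-minimality). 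When $2 \le |A| \le 4$ the five chords are covered by at most four vertices, so by pigeonhole some $A$-vertex covers at least two chords, hence has at least three neighbours in the $C_5$; as any three vertices of a $C_5$ include two adjacent ones, this $A$-vertex lies in a triangle. I would exploit these triangles to build a clique cover of $G$ using only $|A|+1$ cliques, each $A$-vertex being absorbed into a clique together with a $C_5$-edge, yielding $H(G) \ge n - \cliqueCover(G) \ge (5+|A|) - (|A|+1) = 4$ and the desired contradiction with $H(G) < 4$.

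I expect the last paragraph to be the main obstacle. The reduction step cleanly pins the shape of $G$ down to ``$C_5$ plus independent vertices whose joint effect completes $K_5$'', and the large-$|A|$ regime is then handled uniformly by the bipartite/matching lemmas. What requires care is the lower bound $H(G) \ge 4$ in the few-vertex regime: one must verify, across the finitely many ways in which two, three, or four independent vertices can cover all five chords, that an economical clique cover of size $|A|+1$ always exists. This is a finite but fiddly case check, and it is the only place where the argument is not a short appeal to a previously established lemma.
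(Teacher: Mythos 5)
There is a genuine gap, and it sits at the foundation of your argument: you have misread the reduction operation. When you reduce by the independent (hence acyclic) set $A = \{v_6,\dots,v_n\}$, the reduced graph $G^{-A}$ is \emph{not} just ``$C_5$ plus the chords $v_iv_j$ covered by some $A$-vertex.'' Because each edge of an undirected graph is a pair of opposite arcs, every pentagon vertex $v_i$ with a neighbour $w \in A$ admits the closed walk $v_i \to w \to v_i$ through $A$, so $G^{-A}$ acquires a \emph{loop} on every vertex of $N(A)$. This is not a pedantic reading: it is exactly how the paper itself uses the reduction in the proof of Lemma~\ref{lem:claim} (``$G^{-c(S)}$ has a loop on every vertex in $S$''), and the loops are what make the inequality $H(G) \le H(G^{-A})$ true. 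Since each loop contributes $1$ to the entropy, your forcing step collapses: if, say, $N(A) = \{v_1,v_2,v_3\}$, then $H(G^{-A}) = 3 + H(v_4v_5) = 4$, which is perfectly compatible with $3 < H(G) < 4$, so nothing forces $G^{-A} = K_5$ or $N(A)$ to cover all five chords. Worse, the loop-free inequality you actually invoke is false: take $G$ to be $C_5$ together with $v_6$ adjacent to $v_1,v_2$ and $v_7$ adjacent to $v_3,v_4$. Your reading gives $G^{-A} = C_5$ and hence $H(G) \le 5/2$, yet $\{v_6v_1, v_2v_3, v_4v_7\}$ is a matching of size $3$, so $H(G) \ge 3$. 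Everything downstream of the forcing step (the pigeonhole on chords, the application of Lemma~\ref{lem:bipartite} with $N(A)$ the whole pentagon, the clique covers of size $|A|+1$) inherits this flaw; and even granting the forced structure, you concede that the crucial case $2 \le |A| \le 4$ is ``a finite but fiddly case check'' that you do not carry out, so the argument is incomplete on its own terms as well.

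For comparison, the paper gets the structural information you wanted from the reduction by a different, correct device: the vertex-cover bound $H(G) \le \feedback(G)$. If $N(A)$ were contained in a minimum vertex cover of the pentagon, then $\feedback(G) = \feedback(C_5) = 3$ and $H(G) \le 3$, a contradiction; hence $N(A)$ contains three consecutive pentagon vertices. It also shows via Lemma~\ref{lem:claim} (as you do) that every $A$-vertex has at least two pentagon neighbours, disposes of the six-vertex case by Lemma~\ref{lem:wheel}, and then runs a short case analysis on a maximum-degree vertex of $A$: each case either exhibits disjoint cliques certifying $H(G) \ge 4$ or produces a set $S$ violating Lemma~\ref{lem:claim}. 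If you want to salvage your approach, you would need to redo the analysis with the loops taken into account, at which point the reduction tells you little more than the vertex-cover bound already does.
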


\begin{proof}
If $G$ has six vertices, then its entropy is either $3$ or $3.5$ by Lemma \ref{lem:wheel}. 

Let $G$ contain at least seven vertices, such that $\{v_6, \dots, v_n\}$ is an independent set. We first remark that for any $i \ge 6$, $|N(v_i; G)| \ge 2$. Indeed, if $v_i$ is only adjacent to say $v_j$ ($1 \le j < i \le n$), then $S = \{v_j\}$ violates Lemma \ref{lem:claim}. Moreover, $N(\{v_6,\dots, v_n\}; G)$ is not contained in a minimum vertex cover of $G[\{v_1, \dots, v_5\}]$, since otherwise $H(G) \le \feedback(G) = \feedback(C_5) = 3$. Hence $N(\{v_6, \dots, v_n\})$ contains three consecutive vertices in the pentagon, say $v_1, v_2, v_3 \in N(\{v_6, \dots, v_n\})$. Say $v_6$ has the largest degree amongst $v_6, \dots, v_n$; we finish the proof by a case analysis on $N(v_6)$.
\begin{enumerate}
	\item \label{it:1} $\{v_1, v_2, v_3\} \subseteq N(v_6)$. Then $v_1v_2v_6$, $v_2v_3v_6$, and all the other edges of the pentagon are cliques. Since $v_7$ has at least two neighbours on the pentagon, $\{v_1, v_3, v_4, v_5\} \cap N(v_7) \ne \emptyset$. We then consider all possible vertices for the intersection. If $v_1v_7 \in E$, then $\{v_2v_3v_6, v_3v_4, v_5v_7\}$ are disjoint cliques and hence $H(G) \ge 4$; the case $v_3v_7 \in E$ is similar. If $v_4v_7 \in E$, then $\{v_2v_3v_6, v_1v_2, v_4v_7\}$ are disjoint cliques and hence $H(G) \ge 4$; the case $v_5v_7 \in E$ is similar. 
	
	\item \label{it:2} $N(v_6) = \{v_1, v_2\}$. Then $v_3$ must be contained in the neighbourhood of a vertex in $\{v_7, \dots, v_n\}$, say $v_3 \in N(v_7)$. Then $\{v_1 v_2 v_6, v_3 v_7, v_4 v_5\}$ are disjoint cliques and hence $H(G) \ge 4$. The case where $N(v_6) = \{v_2, v_3\}$ is similar.
	
	\item \label{it:3} $N(v_6) = \{v_1, v_3\}$. Denoting $S = \{v_1, v_3\}$, we have $c(S) \subseteq \{v_2, v_6\}$ and $G[S, c(S)]$ contains the matching $\{v_1v_2, v_3v_6\}$ saturating $S$, which violates Lemma \ref{lem:claim}.
\end{enumerate}
\end{proof}

\begin{claim}\label{claim:G(67)}
$n=7$ and $G[\{v_6, v_7\}] = K_2$.
\end{claim}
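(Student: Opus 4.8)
The plan is to rule out $n \ge 8$ by producing a matching of size $4$, which would force $H(G) \ge \mu(G) \ge 4$ via \eqref{eq:clique_cover} and contradict $H(G) \in (3,4)$. By Claim \ref{claim:exactly_one_edge} there is exactly one edge among $\{v_6, \dots, v_n\}$, and I relabel its endpoints so that this edge is $v_6 v_7$. The first fact I would record is that every remaining vertex $v_i$ with $i \ge 8$ satisfies $N(v_i) \subseteq \{v_1, \dots, v_5\}$: it can be adjacent neither to $v_6, v_7$ nor to any other $v_j$ with $j \ge 8$, since $v_6 v_7$ is the unique edge inside $\{v_6, \dots, v_n\}$.

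Suppose, for contradiction, that $v_8$ exists. Since $G$ has no isolated vertices (in fact $|N(v_i)| \ge 2$ for all $i \ge 6$, as noted in the proof of Claim \ref{claim:exactly_one_edge}), the vertex $v_8$ has at least one neighbour $v_j$ on the pentagon. The key combinatorial observation is the ``rotation'' property of $C_5$: for each of its five vertices $w$ there is a maximum matching of size $2$ leaving exactly $w$ unsaturated, namely the pair of disjoint pentagon edges avoiding $w$. I would apply this with $w = v_j$ to obtain a size-$2$ matching $M_0$ of $G[\{v_1, \dots, v_5\}]$ saturating the four pentagon vertices other than $v_j$.

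Then $M_0 \cup \{v_j v_8\} \cup \{v_6 v_7\}$ is a matching of $G$: the edges of $M_0$ cover $\{v_1, \dots, v_5\} \setminus \{v_j\}$, the edge $v_j v_8$ covers $v_j$ together with the off-pentagon vertex $v_8$, and $v_6 v_7$ covers $v_6, v_7$, so the three pieces are pairwise vertex-disjoint. This matching has size $4$, so $H(G) \ge \mu(G) \ge 4$, the desired contradiction. Hence $n = 7$, and as $v_6 v_7$ is the only edge among $\{v_6, v_7\}$ we conclude $G[\{v_6, v_7\}] = K_2$.

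There is no serious obstacle here once the problem is reduced to a matching count: everything is already forced by the structure pinned down in Claims \ref{claim:one_edge} and \ref{claim:exactly_one_edge}. The only point needing a little care is the interplay between the rotation property of $C_5$ and the bookkeeping that the edge from $v_8$ lands precisely on the pentagon vertex left uncovered by $M_0$, ensuring all four matching edges remain disjoint.
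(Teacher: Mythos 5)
Your proof is correct and follows essentially the same route as the paper: the paper also assumes $n \ge 8$, notes (with a ``say $v_1v_8 \in E$'' in place of your explicit rotation argument) that $v_8$ has a pentagon neighbour, and exhibits the size-$4$ matching $\{v_1v_8, v_2v_3, v_4v_5, v_6v_7\}$ to force $H(G) \ge 4$. Your write-up merely makes explicit the symmetry of $C_5$ and the fact that vertices beyond $v_7$ have all neighbours on the pentagon, both of which the paper leaves implicit.
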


\begin{proof}
We only have to prove that $G$ has no more than seven vertices. Suppose $G$ has eight vertices or more, and that the only edge in $G[\{v_6, \dots, v_n\}]$ is $v_6v_7$. Say $v_1v_8 \in E$, then $\{v_1v_8, v_2v_3, v_4v_5, v_6v_7\}$ is a matching in $G$ and hence $H(G) \ge 4$.
\end{proof}

The second step reduces the search to just six possible graphs. We define the graphs $G_1, \dots, G_6$ displayed in Figure \ref{fig:graphs}.

\begin{claim}
$G \in \{G_1, \dots, G_6\}$.
\end{claim}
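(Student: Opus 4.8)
The plan is to leverage the tight structural constraints already established: by Claim~\ref{claim:G(67)} we know $n=7$, the pentagon sits on $v_1,\dots,v_5$, and $v_6v_7$ is the unique edge among $\{v_6,v_7\}$. So the only freedom left is in the bipartite graph $G[\{v_6,v_7\},\{v_1,\dots,v_5\}]$, i.e.\ how the two extra vertices attach to the pentagon. First I would invoke the constraints proved in the earlier claims to prune these attachments: by Claim~\ref{claim:exactly_one_edge} each of $v_6,v_7$ has at least two neighbours on the pentagon, and the forbidden-matching argument of Lemma~\ref{lem:claim} rules out any configuration in which the extra vertices can be matched into a saturating structure. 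The goal of this final reduction is to show that, after applying all these filters, only six isomorphism types survive.

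The key steps, in order, are as follows. I would first record that each $v_i$ ($i\in\{6,7\}$) has $|N(v_i)\cap\{v_1,\dots,v_5\}|\ge 2$, so a priori there are $\binom{5}{2}+\binom{5}{3}+\binom{5}{4}+\binom{5}{5}$ candidate neighbourhoods for each, but most combinations force a matching of size four (hence $H(G)\ge 4$) or violate Lemma~\ref{lem:claim}. The engine here is the same disjoint-clique/matching counting used in Claims~\ref{claim:one_edge}--\ref{claim:G(67)}: whenever the neighbourhoods of $v_6$ and $v_7$ on the pentagon are ``spread out'' enough, one can select four vertex-disjoint cliques (two triangles from $v_6,v_7$ together with two pentagon edges, or four edges), giving $H(G)\ge 4$ and excluding that case. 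Conversely, whenever $N(v_6)$ or $N(v_7)$ lands inside a small feedback-vertex/vertex-cover set of the pentagon, I would apply Lemma~\ref{lem:claim} with an appropriately chosen $S$ (as in case~\ref{it:3} of Claim~\ref{claim:exactly_one_edge}) to eliminate it. What remains after this pruning should be exactly the configurations realising the entropy value $11/3$, and I would match each surviving type against $G_1,\dots,G_6$ by exhibiting an explicit isomorphism.

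The main obstacle will be the sheer combinatorial bookkeeping: there are on the order of $(2^5)^2$ raw attachment patterns, cut down by the degree bound and the pentagon's dihedral symmetry, but the case analysis still bifurcates according to how the neighbourhoods of $v_6$ and $v_7$ overlap and how they sit relative to the edge $v_6v_7$. The delicate part is not any single exclusion but ensuring completeness, that every pattern is either excluded (by a four-clique matching or by Lemma~\ref{lem:claim}) or identified with one of the six graphs, with no case silently dropped. I expect the cleanest organisation is to fix $v_6$ to have the largest pentagon-degree (as in Claim~\ref{claim:exactly_one_edge}), enumerate the possibilities for $N(v_6)\cap\{v_1,\dots,v_5\}$ up to symmetry, and for each of these branch on $N(v_7)\cap\{v_1,\dots,v_5\}$, using the symmetry stabilising $N(v_6)$ to keep the number of subcases manageable. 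The payoff is that the subsequent (final) step of the proof only has to compute $H(G_i)$ for six explicit small graphs rather than reason about an unbounded family.
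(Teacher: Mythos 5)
Your overall strategy is the same as the paper's: fix $v_6$ to have the larger pentagon-degree, enumerate its pentagon-neighbourhood up to the dihedral symmetry, branch on $N(v_7)$, and kill bad configurations with vertex-disjoint cliques forcing $H(G)\ge 4$. However, there is a genuine gap at the foundation of your enumeration: you assert that by Claim~\ref{claim:exactly_one_edge} each of $v_6,v_7$ has at least two neighbours \emph{on the pentagon}. That degree remark was proved in a context where $\{v_6,\dots,v_n\}$ is an independent set, so that all neighbours automatically lie on the pentagon; here, by Claim~\ref{claim:G(67)}, $v_6v_7$ is an edge, and re-running the Lemma~\ref{lem:claim} argument (with $S=\{v_6\}$ or $S=\{v_7\}$) gives only $|N(v_6)|,|N(v_7)|\ge 2$ in $G$, i.e.\ at least \emph{one} pentagon neighbour each, since $v_6$ and $v_7$ count as each other's neighbours.

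This is not a cosmetic slip. In five of the six target graphs ($G_1$, $G_2$, $G_3$, $G_5$, $G_6$) at least one of $v_6,v_7$ has exactly one pentagon neighbour, so an enumeration restricted to pentagon-neighbourhoods of size at least two for both vertices never generates them. Worse, once both pentagon-neighbourhoods have size at least two and are disjoint (which your clique engine correctly enforces, via the triangle $v_1v_6v_7$ argument), essentially every remaining configuration except $G_4$ is destroyed by disjoint cliques: for instance $N(v_6)\supseteq\{v_1,v_2\}$ and $N(v_7)\supseteq\{v_3,v_4\}$ yields the disjoint triangles $v_1v_2v_6$ and $v_3v_4v_7$, hence $H(G)\ge 4$. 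Your case analysis would therefore terminate with only $G_4$ (up to isomorphism), whose entropy is $7/2$, and you would in effect ``prove'' that no graph with entropy in $(7/2,4)$ exists --- contradicting $H(G_1)=11/3$, which is the whole point of the theorem. The fix is to use the correct bound (at least one pentagon neighbour each), exactly as the paper does: its cases \ref{it:3'} and \ref{it:5'} (yielding $G_1$, $G_5$, $G_6$) and the size-one options inside case \ref{it:4'} (yielding $G_2$, $G_3$) all involve a vertex among $v_6,v_7$ with a single pentagon neighbour.
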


\begin{proof}
First, $N(v_6) \cap N(v_7) = \emptyset$. Otherwise, say $v_1v_6, v_1v_7 \in E$, then $\{v_1v_6v_7, v_2v_3, v_4v_5\}$ are disjoint cliques and $H(G) \ge 4$. Suppose $|N(v_6)| \ge |N(v_7)|$, we then have $|N(v_7)| \ge 2$.

\end{proof}
\begin{enumerate}
	\item \label{it:1'} $\{v_1, v_2, v_3\} \subseteq N(v_6)$. Then since $v_2v_7 \notin E$, we can use Case \ref{it:1} of the proof of Claim \ref{claim:exactly_one_edge} to show that $H(G) \ge 4$.
	
	\item \label{it:2'} $N(v_6) = \{v_1, v_2, v_4, v_7\}$. Then $v_3v_7 \in E$ (or $v_5v_7 \in E$, but this is equivalent by symmetry) and $\{v_1v_2v_6, v_3v_7, v_4v_5\}$ are disjoint cliques and $H(G) \ge 4$.
	
	\item \label{it:3'} $N(v_6) = \{v_1, v_2, v_7\}$. Again, if $v_3v_7 \in E$ (or $v_5v_7 \in E$) then $\{v_1v_2v_6, v_3v_7, v_4v_5\}$ are disjoint cliques and $H(G) \ge 4$. Otherwise, $G = G_1$.
	
	\item \label{it:4'} $N(v_6) = \{v_1, v_3, v_7\}$. Then since $v_6 \in N(v_7)$, $v_1, v_3 \notin N(v_7)$, and $2 \le |N(v_7)| \le 3$, we have $G \in \{G_2, G_3, G_4\}$.
	
	\item \label{it:5'} $N(v_6) = \{v_1, v_7\}$. Then $G \in \{G_5, G_6\}$.
\end{enumerate}

The third and final step determines the entropy of $G_1, \dots, G_6$.

\begin{claim}
$H(G_1) = 11/3$ and $H(G_2) = \dots = H(G_6) = 7/2$.
\end{claim}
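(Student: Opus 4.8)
The plan is to pin each entropy between a clique‑cover lower bound and the Shannon bound $H(D)\le\ShannonEntropy(D)$, and to check that the two agree. For the lower bounds, note first that every $G_i$ contains, as a spanning subgraph, the disjoint union of the pentagon $C_5$ on $\{v_1,\dots,v_5\}$ and the edge $v_6v_7$ (all of these graphs carry the edge $v_6v_7$ by Claim \ref{claim:G(67)}, and $v_6,v_7\notin\{v_1,\dots,v_5\}$). Since the guessing number is monotone under subgraphs and the entropy is additive over disjoint vertex sets, $H(G_i)\ge H(C_5)+H(K_2)=5/2+1=7/2$ for every $i$. For $G_1$ this is not enough, so I would instead exhibit a fractional clique cover of weight $10/3$: assign weight $1/3$ to the triangle $v_1v_2v_6$, weight $2/3$ to each of the edges $v_2v_3$, $v_1v_5$, $v_6v_7$, and weight $1/3$ to each of $v_3v_4$, $v_4v_5$, $v_4v_7$. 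A direct check shows every vertex receives total weight exactly $1$, so $\fractionalCliqueCover(G_1)\le 10/3$ and hence $H(G_1)\ge 7-10/3=11/3$.

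For the upper bounds I would compute the Shannon entropies and prove $\ShannonEntropy(G_1)\le 11/3$ and $\ShannonEntropy(G_i)\le 7/2$ for $i\ge 2$. Each of these is a finite linear program in the values $h(S)$, $S\subseteq V$, subject to monotonicity, submodularity, $h(v)\le 1$, and the seven equalities $h(N(v_j)\cup\{v_j\})=h(N(v_j))$. Because the matching lower bounds are already in hand, it suffices to produce, for each graph, a dual certificate: a nonnegative combination of the submodularity and monotonicity inequalities together with the seven defining equalities that telescopes to $h(V)\le 11/3$ (respectively $\le 7/2$). I would construct these by extending the known derivation of $\ShannonEntropy(C_5)=5/2$ from \cite{CM11} (which couples the five cycle equalities with submodularity along the pentagon) using the two extra equalities contributed by $v_6$ and $v_7$.

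The main obstacle is exactly the production of these Shannon certificates, since all the cheaper upper bounds are too weak here: deleting $v_7$ gives only $H(G_1)\le H(G_1-v_7)+1=3+1=4$ (as $G_1-v_7$ is the pentagon with $v_6$ joined to the two consecutive vertices $v_1,v_2$, of entropy $3$ by Lemma \ref{lem:wheel}), and the transversal bound gives $H(G_i)\le\feedback(G_i)=4$; so the fractional values $11/3$ and $7/2$ have to emerge from the LP itself. To keep the bookkeeping finite I would exploit symmetry: the automorphism $(v_1v_2)(v_3v_5)$ of $G_1$ lets me restrict to symmetric optimal $h$, and the rotational and reflective symmetries of the pentagon collapse $G_2,G_3,G_4$ (the three completions of $N(v_6)=\{v_1,v_3,v_7\}$) and $G_5,G_6$ to just a couple of genuinely distinct certificate templates, so that one uniform argument settles $G_2,\dots,G_6$.

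Matching each Shannon upper bound against its clique‑cover lower bound then yields $H(G_1)=11/3$ and $H(G_2)=\dots=H(G_6)=7/2$; as a by‑product this shows the Shannon bound is tight for all six graphs. The delicate part, and the one I expect to absorb most of the effort, is verifying that the LP optimum for $G_1$ is exactly $11/3$ rather than something strictly between $7/2$ and $4$, which is what ultimately certifies $11/3$ as a genuinely new value of $\mathcal{H}$.
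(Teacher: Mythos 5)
Your proposal is correct and follows essentially the same route as the paper: the identical fractional clique cover of weight $10/3$ for $G_1$ (the paper's stated value ``$10/13$'' is a typo for $10/3$), the lower bound $H(G_i) \ge H(C_5) + 1 = 7/2$ for $i \ge 2$, and Shannon-entropy upper bounds $\ShannonEntropy(G_1) = 11/3$ and $\ShannonEntropy(G_i) = 7/2$, which the paper likewise leaves as ``direct calculations'' rather than exhibiting explicit dual certificates.
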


\begin{proof}
Direct calculations show that the Shannon entropy of $G_1$ is $\ShannonEntropy(G_1) = 11/3$. Conversely, the fractional clique covering number is $10/13$, which is achieved by the following weights:
$$
	w(v_1v_2v_6) = w(v_3v_4) = w(v_4v_5) = w(v_4v_7) = 1/3, \quad w(v_1v_5) = w(v_2v_3) = w(v_6v_7) = 2/3.
$$
Therefore, $11/3 = n - \fractionalCliqueCover(G_1) \le H(G_1) \le \ShannonEntropy(G_1) = 11/3$. For $G_i$, $i \ge 2$, direct calculations show that $\ShannonEntropy(G_i) = 7/2$. Conversely, $H(G_i) \ge H(C_5) + 1 = 7/2$.
\end{proof}
\end{proof}

\begin{figure}
\centering
	\subfloat[$G_1$]
	{\begin{tikzpicture}
	\tikzstyle{every node}=[draw,shape=circle];

	\node (1) at (90-72*1:2) {1};
	\node (2) at (90-72*2:2) {2};
	\node (3) at (90-72*3:2) {3};
	\node (4) at (90-72*4:2) {4};
	\node (5) at (90-72*5:2) {5};	
	
    \draw (1) -- (2);
    \draw (2) -- (3);
    \draw (3) -- (4);
    \draw (4) -- (5);
    \draw (5) -- (1);

	\node (6) at (0,1) {6};
	\node (7) at (0,-1) {7};
	
	\draw (6) -- (7);
	
	\draw (6) -- (1);
	\draw (6) -- (2);
	\draw (7) -- (4);
	\end{tikzpicture}} \hspace{1cm}
	\subfloat[$G_2$]
	{\begin{tikzpicture}
	\tikzstyle{every node}=[draw,shape=circle];

	\node (1) at (90-72*1:2) {1};
	\node (2) at (90-72*2:2) {2};
	\node (3) at (90-72*3:2) {3};
	\node (4) at (90-72*4:2) {4};
	\node (5) at (90-72*5:2) {5};	
	
    \draw (1) -- (2);
    \draw (2) -- (3);
    \draw (3) -- (4);
    \draw (4) -- (5);
    \draw (5) -- (1);

	\node (6) at (0,1) {6};
	\node (7) at (0,-1) {7};
	
	\draw (6) -- (7);
	
	\draw (6) -- (1);
	\draw (6) -- (3);
	\draw (7) -- (2);
	\end{tikzpicture}} \hspace{1cm}
	\subfloat[$G_3$]
	{\begin{tikzpicture}
	\tikzstyle{every node}=[draw,shape=circle];

	\node (1) at (90-72*1:2) {1};
	\node (2) at (90-72*2:2) {2};
	\node (3) at (90-72*3:2) {3};
	\node (4) at (90-72*4:2) {4};
	\node (5) at (90-72*5:2) {5};	
	
    \draw (1) -- (2);
    \draw (2) -- (3);
    \draw (3) -- (4);
    \draw (4) -- (5);
    \draw (5) -- (1);

	\node (6) at (0,1) {6};
	\node (7) at (0,-1) {7};
	
	\draw (6) -- (7);
	
	\draw (6) -- (1);
	\draw (6) -- (3);
	\draw (7) -- (4);
	\end{tikzpicture}} \hspace{1cm}
	\subfloat[$G_4$]
	{\begin{tikzpicture}
	\tikzstyle{every node}=[draw,shape=circle];

	\node (1) at (90-72*1:2) {1};
	\node (2) at (90-72*2:2) {2};
	\node (3) at (90-72*3:2) {3};
	\node (4) at (90-72*4:2) {4};
	\node (5) at (90-72*5:2) {5};	
	
    \draw (1) -- (2);
    \draw (2) -- (3);
    \draw (3) -- (4);
    \draw (4) -- (5);
    \draw (5) -- (1);

	\node (6) at (0,1) {6};
	\node (7) at (0,-1) {7};
	
	\draw (6) -- (7);
	
	\draw (6) -- (1);
	\draw (6) -- (3);
	\draw (7) -- (2);
	\draw (7) -- (4); 
	\end{tikzpicture}} \hspace{1cm}
	\subfloat[$G_5$]
	{\begin{tikzpicture}
	\tikzstyle{every node}=[draw,shape=circle];

	\node (1) at (90-72*1:2) {1};
	\node (2) at (90-72*2:2) {2};
	\node (3) at (90-72*3:2) {3};
	\node (4) at (90-72*4:2) {4};
	\node (5) at (90-72*5:2) {5};	
	
    \draw (1) -- (2);
    \draw (2) -- (3);
    \draw (3) -- (4);
    \draw (4) -- (5);
    \draw (5) -- (1);

	\node (6) at (0,1) {6};
	\node (7) at (0,-1) {7};
	
	\draw (6) -- (7);
	
	\draw (6) -- (1);
	\draw (7) -- (2);
	\end{tikzpicture}} \hspace{1cm}
	\subfloat[$G_6$]
	{\begin{tikzpicture}
	\tikzstyle{every node}=[draw,shape=circle];

	\node (1) at (90-72*1:2) {1};
	\node (2) at (90-72*2:2) {2};
	\node (3) at (90-72*3:2) {3};
	\node (4) at (90-72*4:2) {4};
	\node (5) at (90-72*5:2) {5};	
	
    \draw (1) -- (2);
    \draw (2) -- (3);
    \draw (3) -- (4);
    \draw (4) -- (5);
    \draw (5) -- (1);

	\node (6) at (0,1) {6};
	\node (7) at (0,-1) {7};
	
	\draw (6) -- (7);
	
	\draw (6) -- (1);
	\draw (7) -- (3);
	\end{tikzpicture}}
	\caption{The graphs $G_1, \dots, G_6$.} \label{fig:graphs}
\end{figure}
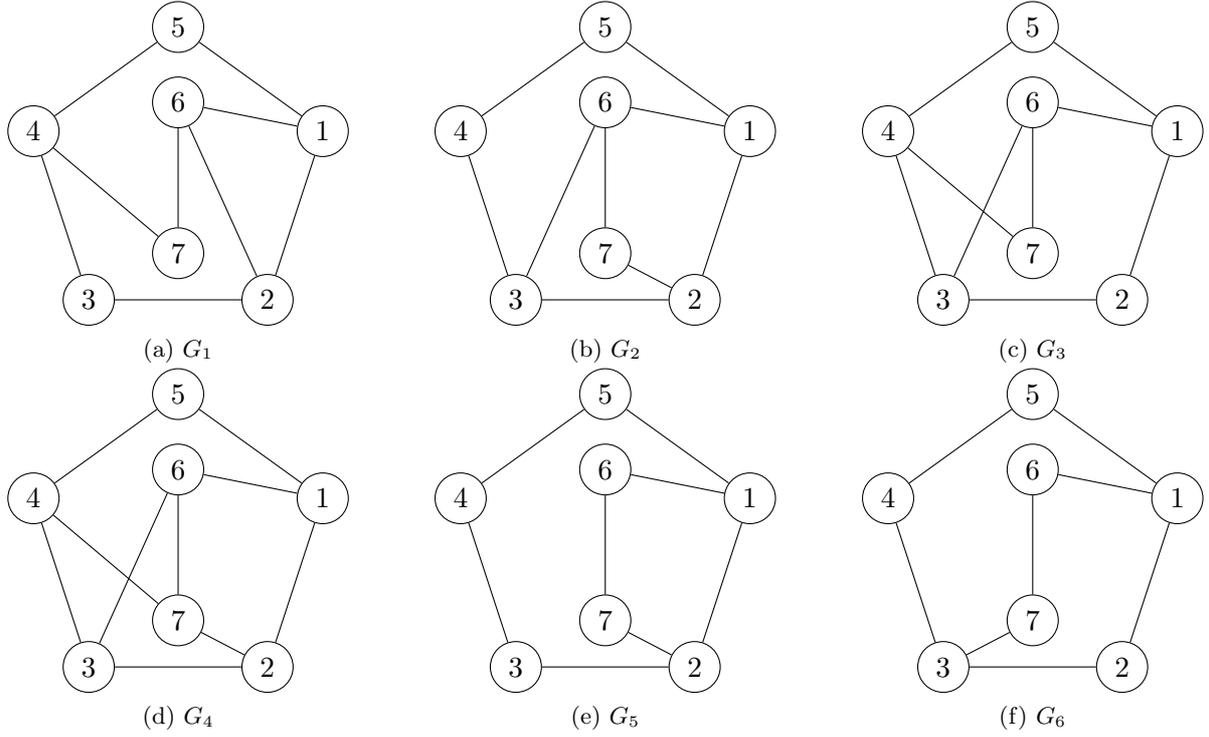

\section{Acknowledgment}
This work was supported by the Engineering and Physical Sciences Research Council [grant number EP/K033956/1].


\providecommand{\bysame}{\leavevmode\hbox to3em{\hrulefill}\thinspace}
\providecommand{\MR}{\relax\ifhmode\unskip\space\fi MR }
\providecommand{\MRhref}[2]{%
  \href{http://www.ams.org/mathscinet-getitem?mr=#1}{#2}
}
\providecommand{\href}[2]{#2}

\end{document}